\def \arxive {1}
\title{Secure Computation to Hide Functions of Inputs}
\theoremstyle{plain} \newtheorem{thm}{Theorem}
\theoremstyle{plain} 
\theoremstyle{plain} \newtheorem{claim}{Claim}
\theoremstyle{plain} \newtheorem{defn}{Definition}
\theoremstyle{definition} \newtheorem{exmp}{Example}
\theoremstyle{plain} 
\theoremstyle{definition} \newtheorem{remark}{Remark}
\pgfplotsset{compat=1.14}
\begin{document}
\author{%
  \IEEEauthorblockN{Gowtham~R.~Kurri and Vinod~M.~Prabhakaran}
  \IEEEauthorblockA{Tata Institute of Fundamental Research\\
                    Mumbai, India\\
                    Email: \{k.raghunath, vinodmp\}@tifr.res.in}
}

\maketitle

\begin{abstract}
We consider a two-user secure computation problem in which Alice and Bob communicate interactively in order to compute some deterministic functions of the inputs. The privacy requirement is that each user should not learn any additional information about a \emph{function} of the inputs other than what can be inferred from its own input and output. For the distribution-free setting, i.e., when the protocol must be correct and private for any joint input distribution, we completely characterize the set of all securely computable functions. When privacy is required only against Bob who computes a function based on a single transmission from Alice, we show that asymptotically secure computability is equivalent to perfectly secure computability. Separately, we consider an eavesdropper who has access to all the communication and should not learn any information about some function of the inputs (possibly different from the functions to be computed by the users) and show that interaction may be necessary for secure computation.
\end{abstract}
\section{Introduction}
The goal of two-user secure computation (see \cite{CramerDBNB15} and reference therein) is for users to interactively compute a function without revealing any additional information about their inputs other than what can be inferred from their own inputs and output. In some settings, privacy might be desired only for certain attributes of the data and not the entire input. This motivates the study of the following problem.

We consider a two-user secure computation problem in which Alice and Bob communicate interactively in order to compute some deterministic functions of the inputs. Unlike in standard secure computation, our privacy requirement is that each user should not learn any additional information about a \emph{function} of the inputs other than what can be inferred from its own input and output. Separately, we consider an eavesdropper who has access to all the communication and should not learn any information about some function of the inputs (possibly different from the functions to be computed by the users).  

Two-user secure computation was formally introduced in the context of computational secrecy by Yao in his seminal works ~\cite{Yao82,Yao86}. Information theoretically, secure computation among $n$ users was studied by Ben-Or, Goldwasser, and Wigderson~\cite{BGW88} and Chaum, Crep\`{e}au, and Damg{\aa}rd~\cite{CCD88} independently. They showed that any function can be securely computed even if $t<\frac{n}{2}$ ($t<\frac{n}{3}$, resp.) honest-but-curious (malicious, resp.) users collude. 

Not all functions are securely computable by two users interacting over a noiseless link\footnote{Note that if the
two users have access to a noisy channel or correlated random variables, a
larger class of functions may be securely computed~\cite{CREK88}. A characterization of such stochastic resources which allow any function to be securely computed is given in \cite{MajiPR12}.} (e.g., binary \text{AND} function is not securely computable by two users). Beaver~\cite{Beaver89} and Kushilevitz~\cite{Kuchilevitz} gave a combinatorial characterization of two-user securely computable deterministic functions. Maji et al.~\cite{MajiPR09} showed that the same characterization holds for statistically secure protocols also. Narayan et al.~\cite{NarayanTW15} gave an alternate characterization using common randomness generated by interactive deterministic secure protocols. Data et al.~\cite{DataKRP18} gave single-letter characterizations of feasibility and optimal communication rates for two-user interactive secure randomized function computation.

 Tyagi et al.~\cite{TyagiNG11}, Tyagi~\cite{Tyagi2012}, and Gohari~et al.~\cite{GohariYA12} studied secure computation when a certain function of inputs to the users needs to be hidden from an eavesdropper having access to the communication. Lindell et al.~\cite{Lindell} studied input-size hiding secure computation. Basciftci et al.~\cite{BasciftciWI16} and Kalantari et al.~\cite{KalantariSS18} studied privacy versus utility trade-offs in data release mechanisms where the input data comprises of some useful data (whose information revealed by the output data quantifies utility) and some sensitive data (whose information revealed by the output data quantifies privacy leakage).

Our main results are as follows:
\begin{itemize}
\item In our two-user secure computation problem, for the distribution-free setting, i.,e., when the protocol must be correct and private for any joint input distribution (see Claim~\ref{claim}), we completely characterize the set of all securely computable functions (Theorem~\ref{theorem:kush}). 
\item When privacy is required only against Bob who computes a function based on a single transmission from Alice, we show that asymptotically secure computability is equivalent to perfectly secure computability (Theorem~\ref{equivalence}). 
\item When privacy is required against an eavesdropper, Tyagi et al.~\cite{TyagiNG11} studied a special case where the function that needs to be hidden from the eavesdropper is same as the one computed by the users and showed that interaction is not necessary for secure computation. Later, Tyagi \cite{Tyagi2012} studied a larger class of functions where the function that needs to be hidden from an eavesdropper is equal to one of the functions computed by the users and gave a characterization of secure computability for a class of functions. The protocol used for achievability there involves interactive communication. We ask the complementary question: Is interaction necessary for secure computation? We answer this question in the affirmative through an example (Example~\ref{example}). 
\end{itemize}

\section{Privacy Against Users Themselves}
\subsection{Distribution-Free Setting}
\begin{figure}[htbp]
\begin{center}
\includegraphics[scale=0.9]{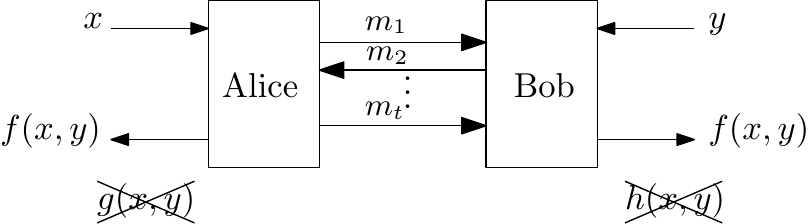}
\end{center}
\caption{Secure computation with privacy against users themselves. Alice and Bob communicate interactively in order to compute $f(x,y)$ such that Alice (resp.Bob) should not learn any additional information about $g(x,y)$ (resp. $h(x,y)$) other than what can be inferred from her (resp. his) input and $f(x,y)$.}\label{distfree}
\end{figure}
Consider two honest-but-curious (semi-honest) users Alice and Bob having inputs $x$ and $y$ respectively. They communicate interactively over many rounds to compute a function $f(x,y)$ privately, that is, Alice (resp. Bob) should not learn any additional information about $g(x,y)$ (resp. $h(x,y)$) other than what can be inferred from her (resp. his) input and the function value $f(x,y)$ (see Figure~\ref{distfree}). The users alternately send messages to each other, i.e., the message $m_i$ that a user sends in the $i^{\text{th}}$ round is a function of its input, all the messages it has seen so far $m_1,m_2,\dots,m_{i-1}$, and its private randomness. Let $m$ denote the communication string of messages $m_1,m_2,\dots,m_t$ that are sent during the protocol. We say that the protocol $\Pi$ computes the function $f(x,y)$ correctly\footnote{While the definition involves any probability of error strictly less than 0.5, the protocol presented in the proof of Theorem~\ref{theorem:kush} achieves perfect correctness, i.e., $P(\Pi(x,y)=f(x,y))=1$.} if $P(\Pi(x,y)=f(x,y))>0.5$. We say that a protocol for computing $f(x,y)$ is private against Alice with respect to $g(x,y)$ if for every two inputs $(x,y_1)$ and $(x,y_2)$ satisfying $g(x,y_1)\neq g(x,y_2)$ and $f(x,y_1)=f(x,y_2)$, and for every communication string $m$, $p(m|x,y_1)=p(m|x,y_2)$. Similarly, we say that a protocol for computing $f(x,y)$ is private against Bob with respect to $h(x,y)$ if for every two inputs $(x_1,y)$ and $(x_2,y)$ satisfying $h(x_1,y)\neq h(x_2,y)$ and $f(x_1,y)=f(x_2,y)$, and for every communication string $m$, $p(m|x_1,y)=p(m|x_2,y)$. Now, the following claim gives an alternative definition for the privacy..
\begin{claim}\label{claim}
A protocol $\Pi$ is private against Alice and Bob with respect to the functions $f(x,y)$ and $h(x,y)$ if and only if for all the input distributions $p_{XY}$, we have $I(M;G|F,X)=I(M;H|F,Y)=0$, where $F=f(X,Y), G=g(X,Y)$, and $H=h(X,Y)$.
\end{claim}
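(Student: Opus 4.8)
The plan is to prove the two directions separately, and by the symmetry between Alice and Bob it suffices to establish: \emph{$\Pi$ is private against Alice with respect to $g$ if and only if $I(M;G\mid F,X)=0$ for every input distribution $p_{XY}$} (the statement for $h$, $Y$ follows verbatim with the roles of the users swapped). Throughout, write $p_\Pi(m\mid x,y)=P(M=m\mid X=x,Y=y)$ for the transcript law induced by the protocol; since each message depends only on the sender's input, the transcript so far, and private randomness that is independent of $(X,Y)$, this quantity is well defined and does not depend on $p_{XY}$, so the joint law factors as $p(x,y,m)=p_{XY}(x,y)\,p_\Pi(m\mid x,y)$, with $F=f(X,Y)$ and $G=g(X,Y)$ deterministic functions of $(X,Y)$.

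For the ``only if'' direction, fix an arbitrary $p_{XY}$ and a pair $(x,f)$ with $P(X=x,F=f)>0$; it is enough to show that $M$ and $G$ are independent conditioned on $\{X=x,F=f\}$, since $I(M;G\mid F,X)$ is the average of these conditional mutual informations. Let $S=\{y:p_{XY}(x,y)>0,\ f(x,y)=f\}$. If $g(x,\cdot)$ is constant on $S$, then $G$ is constant on the conditioning event and independence is immediate. Otherwise I claim $p_\Pi(\cdot\mid x,y)$ is the \emph{same} distribution for all $y\in S$: for $y_1,y_2\in S$ with $g(x,y_1)\neq g(x,y_2)$ this is exactly the privacy hypothesis (as $f(x,y_1)=f=f(x,y_2)$); for $y_1,y_2\in S$ with $g(x,y_1)=g(x,y_2)$, choose a ``witness'' $y_3\in S$ with $g(x,y_3)\neq g(x,y_1)$ — which exists because $g(x,\cdot)$ is non-constant on $S$ — and chain $p_\Pi(\cdot\mid x,y_1)=p_\Pi(\cdot\mid x,y_3)=p_\Pi(\cdot\mid x,y_2)$. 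Calling this common law $q(\cdot)$, a short computation gives $P(M=m\mid X=x,F=f,G=g_0)=q(m)$ for every value $g_0$ in the support, which is precisely $M\perp G\mid(X,F)$ on this slice.

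For the ``if'' direction, assume $I(M;G\mid F,X)=0$ for every $p_{XY}$, and take inputs $(x,y_1),(x,y_2)$ with $g(x,y_1)\neq g(x,y_2)$ and $f(x,y_1)=f(x,y_2)$. Apply the hypothesis to the two-point distribution $p_{XY}(x,y_1)=p_{XY}(x,y_2)=\tfrac12$. Under it $X$ and $F$ are deterministic, so $0=I(M;G\mid F,X)=I(M;G)$; since $g(x,y_1)\neq g(x,y_2)$, the variable $G$ is a bijective function of $Y$ on the support, hence $M\perp Y$, giving $p_\Pi(m\mid x,y_1)=P(M=m\mid Y=y_1)=P(M=m\mid Y=y_2)=p_\Pi(m\mid x,y_2)$ for all $m$. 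This is exactly privacy against Alice with respect to $g$.

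I expect the genuine obstacle to be the ``only if'' direction, and specifically the witness/chaining step: the privacy definition only constrains transcript laws for input pairs whose $g$-values \emph{differ}, yet conditional independence also requires pinning down transcript laws for pairs with \emph{equal} $g$-value, which one can only reach by routing through a third input of distinct $g$-value — so the argument must be careful to produce such a witness exactly in the (non-trivial) case where it is needed while observing that the complementary case is vacuous. The ``if'' direction is comparatively routine once one recognizes that the freedom to choose $p_{XY}$, in particular two-point distributions, lets the conditional-independence statement localize to individual input pairs.
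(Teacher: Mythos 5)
Your proposal is correct and follows essentially the same route as the paper's proof: the forward direction handles the trivial case where $g(x,\cdot)$ is constant on the $\{X=x,F=f\}$ slice separately and otherwise chains transcript laws through a witness $y$ with a distinct $g$-value, and the backward direction localizes to a two-point distribution on $(x,y_1),(x,y_2)$ under which $X$ and $F$ are deterministic and $G$ determines $Y$. No substantive differences to report.
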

This claim is proved in\if \arxive 0 an extended version~\cite{KurriP19}\fi \if \arxive 1 the Appendix\fi.

A function triple $(f(x,y),g(x,y),h(x,y))$ is said to be \emph{securely computable} if there exists a protocol computing $f(x,y)$ that is private against Alice and Bob with respect to the functions $g(x,y)$ and $h(x,y)$, respectively. We recover the standard two-user secure computation problem~\cite{Kuchilevitz} by setting $g(x,y)=y$ and $h(x,y)=x$. 

We define $x_1\sim_{\mathcal{X}} x_2$ if $\exists y$ s.t. $g(x_1,y)\neq g(x_2,y)$ and  $f(x_1,y)=f(x_2,y)$. An equivalence relation $\equiv_\mathcal{X}$ on $\mathcal{X}$ is defined as the reflexive, transitive closure of the relation $\sim_\mathcal{X}$. Similarly, the relations $\sim_\mathcal{Y}$ and $\equiv_\mathcal{Y}$ are defined on $\mathcal{Y}$ also. Let {$R$} be the matrix corresponding to $f(x,y)$ with entry in row $x$ and column $y$ equal to $f(x,y)$. A matrix is called \emph{monochromatic} if {all its entries are equal}. 
{
\begin{defn}
A matrix is called \emph{forbidden} (for functions $g$ and $h$) if it is not monochromatic, all $x$'s are equivalent (under $\equiv_\mathcal{X}$), and all $y$'s are equivalent (under $\equiv_\mathcal{Y}$).
\end{defn}
}
 Then, we have the following theorem which characterizes the set of all securely computable function triples.

\begin{thm}\label{theorem:kush}
A function triple $(f(x,y),g(x,y),h(x,y))$ is securely computable if and only if the corresponding matrix {$R$} does not contain a forbidden sub-matrix.
\end{thm}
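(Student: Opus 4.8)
The plan is to follow the template of Kushilevitz's combinatorial characterization and its statistical strengthening by Maji et al., replacing the role of the ``standard'' row/column relations by the modified relations $\equiv_{\mathcal X},\equiv_{\mathcal Y}$. The first step is a purely combinatorial reformulation: $R$ contains no forbidden sub-matrix if and only if $R$ admits a \emph{recursive decomposition}, meaning that either $R$ is monochromatic, or its rows split into at least two $\equiv_{\mathcal X}$-classes (a ``row split''), or its columns split into at least two $\equiv_{\mathcal Y}$-classes (a ``column split''), and in the latter two cases each block of the split again admits a recursive decomposition. The ``if'' direction is immediate; for the ``only if'', if $R$ is not monochromatic then, not being forbidden, it admits a row or a column split, and each block is a sub-matrix of $R$ and hence also contains no forbidden sub-matrix, so one recurses, the recursion terminating because blocks strictly shrink. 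The two implications of the theorem are then proved separately: sufficiency by building a protocol from a recursive decomposition, and necessity by a transcript argument on forbidden sub-matrices.

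\textbf{Sufficiency.} Given a recursive decomposition of $R$, I would build a protocol $\Pi$ by descending its tree: at a row-split node Alice announces the index of the $\equiv_{\mathcal X}$-class containing $x$; at a column-split node Bob announces the index of the $\equiv_{\mathcal Y}$-class containing $y$; the parties then restrict to the indicated block and repeat, and at a (monochromatic) leaf both output the common value, which equals $f(x,y)$. Correctness is immediate (in fact perfect). For privacy I would verify the combinatorial definition directly (equivalently, the condition $I(M;G\mid F,X)=I(M;H\mid F,Y)=0$ of Claim~\ref{claim}): the relations $\equiv_{\mathcal X}$ and $\equiv_{\mathcal Y}$ are defined precisely so that any pair of inputs which a private protocol is forbidden to distinguish lies in a single class at every split it reaches, whence the corresponding transcript distributions coincide. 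Concretely, for a pair $(x_1,y),(x_2,y)$ relevant to privacy against Bob, $x_1$ and $x_2$ are merged by $\equiv_{\mathcal X}$ at the block in question, so Alice's announcements along the path agree and Bob's are identical functions of $y$, giving $p(m\mid x_1,y)=p(m\mid x_2,y)$ for every $m$; privacy against Alice is symmetric via $\equiv_{\mathcal Y}$.

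\textbf{Necessity.} Suppose $R'$, the restriction of $R$ to some $\mathcal X'\times\mathcal Y'$, is forbidden, and suppose for contradiction that $(f,g,h)$ is securely computable. Restrict all inputs to $\mathcal X'\times\mathcal Y'$, say with the uniform distribution, for which the protocol is still correct and private. I would use two standard tools: the cut-and-paste property, giving $p(m\mid x,y)=a_m(x)\,b_m(y)$ for every transcript $m$, and the reduction (Maji et al.) allowing one to assume perfect correctness, so that $M$ determines $F$ (the hypothesis of error probability $<1/2$ is absorbed here). Then induct on rounds: the first message, say Alice's, has $p(m_1\mid x,y)=a_{m_1}(x)$, and privacy forces $a_{m_1}$ to be constant on the relation generating $\equiv_{\mathcal X}$; since within $R'$ all rows are $\equiv_{\mathcal X}$-equivalent, $m_1$ is independent of $X$ (and trivially of $Y$). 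Conditioning on $m_1$ leaves a full-support product distribution on $\mathcal X'\times\mathcal Y'$, so the argument repeats for the next message using $\equiv_{\mathcal Y}$, and so on, yielding that the whole transcript $M$ is independent of $(X,Y)$. But then $M$ cannot determine $F=f(X,Y)$, which is non-constant because $R'$ is not monochromatic --- a contradiction.

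\textbf{Main obstacle.} I expect the crux to be making the necessity induction airtight: one must argue that after conditioning on any partial transcript the residual computation problem still lives on a forbidden matrix (so the equivalence relations keep collapsing everything), and one must handle randomized, not merely deterministic, protocols --- this is exactly where the cut-and-paste lemma and, if one insists on starting from error probability only $<1/2$, the perfect-versus-statistical security reduction do the real work. On the sufficiency side the only delicate point is checking the combinatorial privacy condition for the constructed protocol, in particular that a party's own class announcements do not leak the function that is to be hidden from the other party; this is precisely what the definitions of $\equiv_{\mathcal X}$ and $\equiv_{\mathcal Y}$ are engineered to guarantee, so the verification should reduce to unwinding those definitions.
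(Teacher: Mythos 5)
Your overall architecture coincides with the paper's: the sufficiency direction is the same recursive-decomposition protocol (Alice and Bob alternately announce the index of the $\equiv_{\mathcal X}$- or $\equiv_{\mathcal Y}$-class of their input within the current block), with privacy following because two inputs that must be indistinguishable are related and hence never separated by any split; and the necessity direction is the same round-by-round induction that propagates equality of transcript distributions across an entire forbidden sub-matrix, using the privacy condition together with the chain structure of $\equiv_{\mathcal X}$ and $\equiv_{\mathcal Y}$. Your explicit argument that ``no forbidden sub-matrix'' implies ``recursively decomposable'' fills in a step the paper only asserts, which is a small plus.

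The one substantive divergence is how you handle the error-probability hypothesis, and it is the weak link in your write-up. You invoke the statistical-to-perfect reduction of Maji et al.~\cite{MajiPR09} to assume perfect correctness so that $M$ determines $F$; that reduction is proved for the standard privacy notion (hiding the entire input), and it is not obvious that it transfers to privacy with respect to arbitrary $g$ and $h$ --- you would have to re-derive it in this model. More to the point, it is unnecessary. The induction you run already yields that the distribution of the \emph{entire} transcript is identical for every cell of the forbidden sub-matrix; since the output is without loss of generality the last message, its distribution is also identical across cells, and a single output distribution cannot place probability greater than $1/2$ on two distinct values of $f$ at two cells of a non-monochromatic block. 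This closes the argument directly from $P(\Pi(x,y)=f(x,y))>0.5$, with no reduction and no need to fix an input distribution (the privacy definition is per-transcript, so the distribution-free argument goes through verbatim). Relatedly, the obstacle you flag --- that the ``residual problem after conditioning must remain forbidden'' --- is a red herring: the privacy condition constrains full-transcript probabilities for the original functions, so one simply marginalizes and divides the factored transcript probability (treating the zero-probability case separately, as the paper does); nothing about the conditioned problem needs to be re-certified as forbidden.
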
 
Trivially, for constant functions $g(x,y)=h(x,y)=c$, every function $f(x,y)$ is securely computable with respect to the functions $g(x,y)$ and $h(x,y)$. Here we give a non-trivial example.
\begin{exmp}
In this example, we give a non-trivial function $f(x,y)$ which is not securely computable under the standard notion of privacy~\cite{Kuchilevitz} but is securely computable under the weaker notion of privacy with respect to some functions $g(x,y)$ and $h(x,y)$. The matrix corresponding to a function $f(x,y)$ is given in Table~\ref{table}. {Under the standard notion of privacy~\cite{Kuchilevitz}, i.e., when $g(x,y)=y$ and $h(x,y)=x$, $f(x,y)$ is not securely computable because the matrix itself is a forbidden matrix.}
%$f(x,y)$ is the entry corresponding to row $x$ and column $y$ and the functions $g(x,y)$ and $h(x,y)$ are such that $g(x,y)=\tilde{g}(y)$  and $h(x,y)=\tilde{h}(x)$ (see Table~\ref{table}).
\begin{table}[htbp]
\caption{A function triple $(f(x,y),\tilde{g}(y),\tilde{h}(x))$.}\label{table}
\begin{center}
\begin{tabular}{c|c|c c c|c}
& & &$\mathcal{Y}$ & &\\
\hline
&$f(x,y)$ & $y_1$ & $y_2$ & $y_3$&{\color{blue}{$\tilde{h}(x)$}}\\
\hline
&$x_1$ & $0$ & $0$ & $1$& {\color{blue}$1$}\\
$\mathcal{X}$&$x_2$ & $0$ & $1$ & $1$ & {\color{blue}$1$}\\
& $x_3$ & $2$ & $1$ & $0$ & {\color{blue}$2$}\\
\hline
& {\color{purple}$\tilde{g}(y)$} & {\color{purple}$1$} & {\color{purple}$2$} & {\color{purple}$2$}
\end{tabular}
\end{center}
\end{table}
{However, when $g(x,y)=\tilde{g}(y)$  and $h(x,y)=\tilde{h}(x)$ where the functions $\tilde{g}(y)$ and $\tilde{h}(x)$ are shown in Table~\ref{table}}, by Theorem~\ref{theorem:kush}, the function triple $(f(x,y),g(x,y),h(x,y))$ is securely computable. 
\end{exmp}
\begin{proof}[Proof of Theorem~\ref{theorem:kush}]
The proof leverages and extends the proof techniques of ~\cite[Theorems 1 and 2]{Kuchilevitz}. {Recall that 
\begin{multline*}
p(m_1,m_2,\dots,m_t|x,y)=\\ p(m_1|x)p(m_2|y,m_1)\dots p(m_{t-1}|x,m_1,\dots,m_{t-2})\\ \times p(m_t|y,m_1,\dots,m_{t-1}),
\end{multline*}
where it is assumed that $t$ is even, without loss of generality.}

\underline{`Only if' part}:
Suppose the matrix {$R$} corresponding to $(f(x,y),g(x,y),h(x,y))$ has a forbidden sub-matrix $N=\{x_1,\dots,x_l \}\times \{y_1,\dots,y_k \}$ and that there exists a secure protocol. Without loss of generality, the last message sent in the protocol $\Pi$ is assumed to contain the function computed and is denoted by $\Pi(x,y)$. Since $N$ is forbidden, all $x_i$'s are equivalent and can be ordered in a way so that for every $i>1$ there exists $i^\prime<i$ such that $x_i\sim_\mathcal{X} x_{i^\prime}$. A similar ordering exists on $y_j$'s also. We shall show that $p(m_1,\dots,m_t|x_i,y_j)=p(m_1,\dots,m_t|x_1,y_1)$ for every $(x_i,y_j)\in N$. Then, the probability distribution of communication messages is same for any $(x_i,y_j)\in N$, which, in turn implies that the last message (i.e., $f(x,y)$) of the communication message is distributed in the same way. Since $N$ is not monochromatic this contradicts the correctness of the protocol, $P(\Pi(x_i,y_j)=f(x_i,y_j))>0.5$, for $(x_i,y_j)\in N$. It now remains to show that $p(m_1,\dots,m_t|x_i,y_j)=p(m_1,\dots,m_t|x_1,y_1)$ for every $(x_i,y_j)\in N$. We show this by induction on the number of rounds $t$. For $i>1$, there is some $i^\prime<i$ such that $x_i\sim_\mathcal{X} x_{i^\prime}$. So, there exists $y$ s.t. $g(x_i,y)\neq g(x_{i^\prime},y)$ and $f(x_i,y)=f(x_{i^\prime},y)$. Now, from the privacy condition $p(m_1|x_i)=p(m_1|x_i,y)=p(m_1|x_{i^\prime},y)=p(m_1|x_{i^\prime})$. Repeating the same argument as above if $i^\prime>1$, we get $p(m_1|x_i)=p(m_1|x_{i^\prime})=\dots =p(m_1|x_1)$. So, the base case for $t=1$ is true. Assume that $$p(m_1,\dots,m_{t-1}|x_i,y_j)=p(m_1,\dots,m_{t-1}|x_1,y_1)$$ for every $(x_i,y_j)\in N$. If $$p(m_1,\dots,m_{t-1}|x_i,y_j)=p(m_1,\dots,m_{t-1}|x_1,y_1)=0$$ then it is trivial that $$p(m_1,\dots,m_{t}|x_i,y_j)=p(m_1,\dots,m_{t}|x_1,y_1)=0.$$ So, assume that $$p(m_1,\dots,m_{t-1}|x_i,y_j)=p(m_1,\dots,m_{t-1}|x_1,y_1)\neq 0.$$ Without loss of generality, assume that $t$ is even (odd $t$ can be handled similarly). Note that
 \begin{multline}
 p(m_1,\dots,m_t|x_i,y_j)=p(m_1,\dots,m_{t-1}|x_i,y_j)\\
 \times p(m_t|y_j,m_1,\dots,m_{t-1})\label{thm1_induc}.
 \end{multline} 
Since there is an ordering on $y_j$'s, for $j>1$, there is some $j^\prime<j$ such that $y_j\sim_\mathcal{Y} y_{j^\prime}$. So, there exists $x$ s.t. $h(x,y_j)\neq h(x,y_{j^\prime})$ and $f(x,y_j)=f(x,y_{j^\prime})$. Therefore, by \eqref{thm1_induc} and the privacy condition, $$p(m_t|y_j,m_1,\dots,m_{t-1})=p(m_t|y_{j^\prime},m_1,\dots,m_{t-1}).$$ Repeating the same argument as above if $j^\prime>1$, we get $$p(m_t|y_j,m_1,\dots,m_{t-1})=p(m_t|y_1,m_1,\dots,m_{t-1}).$$ Hence, we have $p(m_1,\dots,m_t|x_i,y_j)=p(m_1,\dots,m_t|x_1,y_1)$ for every $(x_i,y_j)\in N$. This completes the `only if' part of the proof.

\underline{`If' part}: We need the following definitions. A $C\times D$-matrix, $C\subseteq \mathcal{X}, D\subseteq \mathcal{Y}$, is \emph{rows decomposable} if there exist nonempty sets $C_1,\dots,C_t\  (t\geq 2)$ such that
\begin{itemize}
\item $C_1,\dots,C_t$ form a partition of $C$.
\item For every $x_1,x_2\in C$, if $x_1\sim_\mathcal{X} x_2$ then $x_1$ and $x_2$ are in the same $C_i$.
\end{itemize}
 {Similarly, we can define \emph{columns decomposability}}. A $C\times D$-matrix $K$ is \emph{decomposable} if one of the following conditions holds:
\begin{itemize}
\item  {$K$} is monochromatic.
\item $K$ is rows decomposable to submatrices {$C_1\times D$-matrix $A_1$,$\dots,C_t\times D$-matrix $A_t$}, which are all in turn decomposable.
\item $K$ is columns decomposable to submatrices {$C\times D_1$-matrix $B_1$,$\dots,C\times D_t$-matrix $B_t$}, which are all in turn decomposable.
\end{itemize}
Note that for a $C\times D$-matrix, the optimal row decomposition and the optimal column decomposition (optimal in the sense of maximum number of subsets in the partition) are unique and are determined by the equivalence classes under $\equiv_\mathcal{X}$ and $\equiv_\mathcal{Y}$, respectively. 
Suppose the $\mathcal{X}\times\mathcal{Y}$-matrix {$R$} corresponding to the function $f$ does not contain a forbidden sub-matrix. This implies that {$R$} is a decomposable matrix. Consider the following protocol. {Let us assume that $R$ is columns decomposable so that Alice starts the protocol.}
\begin{enumerate}[(1)]
\item Initialize $C=\mathcal{X}, D=\mathcal{Y}$, and the matrix $K=R$,
\item While the $C\times D$-matrix $K$ is non-monochromatic,
\begin{enumerate}[(a)]
\item Alice sends $i$ such that $x\in C_i$, {where the submatrices $C_1\times D$-matrix $A_1$, $\dots$, $C_t\times D$-matrix $A_t$, form the optimal rows decomposition of the $C\times D$-matrix $K$. Both users then set $C=C_i$ and $K=A_i$.}
\item If the $C\times D$-matrix $K$ is non-monochromatic, Bob sends $j$ such that $y\in D_j$, {where the submatrices $C\times D_1$-matrix $B_1$, $\dots$, $C\times D_t$-matrix $B_t$, form the optimal columns decomposition of the $C\times D$-matrix $K$. Both users then set $D=D_j$ and $K=B_j$.}
\end{enumerate}
\item Alice or Bob sends the constant value in the $C\times D$-matrix {$K$} as the value of $f(x,y)$.
\end{enumerate}
%\begin{enumerate}[(a)]
%\item Alice sends $i$ such that $x\in C_i$ in the optimal rows decomposition of the $C\times D$-matrix $K$. Both users then set $C=C_i$ \RRB{and $K=A_i,$ the $C_i\times D$-matrix}.
%\item Bob sends $j$ such that $y\in D_j$ in the optimal columns decomposition of the $C\times D$-matrix $K$. Both users then set $D=D_j$ \RRB{and $K=B_j,$ the $C\times D_j$-matrix.}
%\end{enumerate}
%\item Alice or Bob sends the constant value in the $C\times D$-matrix \RRB{$K$} as the value of $f(x,y)$.
%\end{enumerate}
Now, we analyse the protocol for correctness and privacy. Since the $C\times D$-matrix is always decomposable and the input always belongs to the $C\times D$-matrix in every round, the algorithm terminates with the correct value of $f(x,y)$. As the protocol above is deterministic, to show that the protocol is private against Alice, {it suffices to argue that the same communication messages are transmitted for two inputs $(x,y_1)$ and $(x,y_2)$ satisfying $g(x,y_1)\neq g(x,y_2)$ and $f(x,y_1)=f(x,y_2)$. This is in fact true since the protocol never differentiates between $y_1$ and $y_2$ because $y_1\sim_\mathcal{Y}y_2$ in every iteration}. Similarly, we can argue that the protocol is also private against Bob. This completes the `if' part of the proof.  
\end{proof}
\begin{remark}
In standard secure computation~\cite{Kuchilevitz}, secure computability with any full-support input distribution (e.g., uniform distribution) is equivalent to secure computability in distribution-free setting (and therefore with any other input distribution). In contrast, for our secure computation problem, it turns out that secure computability with a full-support input distribution, e.g., the uniform distribution, does not imply secure computability with all other distributions (let alone in the distribution-free setting).\if \arxive 0 See the extended version~\cite{KurriP19} for an example which illustrates this.\fi \if \arxive 1 The following example illustrates this.
\begin{exmp}
Consider the following function $f(x,y)$ with $x\in\{0,1\}$ and $y=(y^\prime,y^{\prime \prime})\in\{0,1\}^2$ in Table~\ref{table2}.
\begin{table}[htbp]
\caption{$f\left(x,\left(y^\prime,y^{\prime\prime}\right)\right)=x\wedge y^\prime$.}\label{table2}
\begin{center}
\begin{tabular}{|c|c|c|c|c|}
\hline
$f(x,y)$ &$y=(0,0)$ &$y=(0,1)$ &$y=(1,0)$ &$y=(1,1)$\\
\hline
$x=0$ &$0$ &$0$ &$0$ &$0$\\
\hline 
$x=1$ &$0$ &$0$ &$1$ &$1$\\
\hline
\end{tabular}
\end{center}
\end{table}
Suppose $g(x,y)=y^{\prime\prime}$ needs to be hidden from Alice and $h(x,y)=x$ needs to be hidden from Bob. Consider a protocol where Bob sends $y^\prime$ to Alice who computes the output and sends it to Bob.  It is easy to check that, for uniform distribution, this protocol is secure. Now consider an input distribution $p_{XY}=p_X.p_Y$ with $X\sim \text{Uniform}\{0,1\}$ and $Y=(Y^\prime,Y^{\prime\prime})\sim \text{Uniform}\{(00),(11)\}$. With this input distribution, there does not exist a secure protocol since the problem reduces to two-user secure computation of binary AND function (as $Y^\prime=Y^{\prime\prime}$) which is impossible~\cite{Kuchilevitz}. 
\end{exmp}

\fi  

Now we turn to secure computability with a fixed input distribution. 
\end{remark}
\subsection{Non-Interactive Setting With Input Distribution}
\begin{figure}[htbp]
\begin{center}
\includegraphics[scale=0.9]{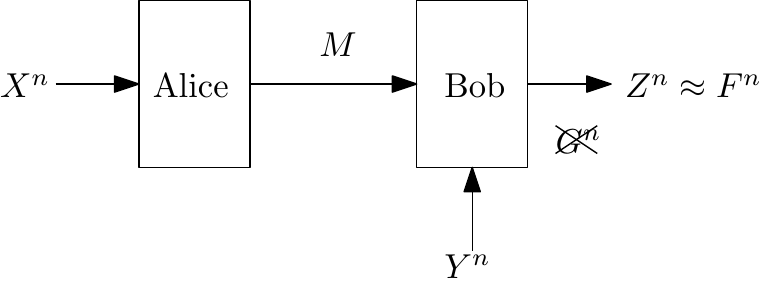}
\end{center}
\caption{Inputs are i.i.d. $p_{XY}$. Asymptotically secure setting is shown. Bob has to compute $Z^n$ that is \emph{approximately} close to $F^n$, $F_i=f(X_i,Y_i)$, while learning no additional information about $G^n$, $G_i=g(X_i,Y_i)$.}\label{figure:nonint}
\end{figure}
Let the inputs to Alice and Bob are $X^n$ and $Y^n$, respectively, where $(X_i,Y_i)$, $i=1,\dots , n$, are independent and identically distributed (i.i.d.) with distribution $p_{XY}$, {where $X$ and $Y$ take values in finite sets $\mathcal{X}$ and $\mathcal{Y}$, respectively}. In this section, we restrict ourselves to the protocols that use single round of communication. Assume that privacy is required only against Bob who computes a function based on a single transmission from Alice. We say that a triple $(p_{XY},f(x,y),g(x,y))$ is \emph{asymptotically securely computable with no interaction} if there exists a sequence of conditional probability distributions $p_{M,Z^n|X^nY^n}:=p_{M|X^n}p_{Z^n|M,Y^n}$ (see Figure~\ref{figure:nonint}) s.t. for every $\epsilon>0$,
 \begin{align}
 &P(Z^n\neq F^n)\leq \epsilon, F_i=f(X_i,Y_i),\label{eqn:distasympcorrectness}\\
 &I(M;G^n|Z^n,Y^n)\leq n\epsilon, G_i=g(X_i,Y_i)\label{eqn:prvbob},
 \end{align}
 for large enough $n$. We say that $(p_{XY},f(x,y),g(x,y))$ is perfectly securely computable if there exists $p_{U|XY}$ s.t.
 \begin{align}
 &U-X-Y,\label{eqn:chap421}\\
 &H(F|U,Y)=0,\label{eqn:chap422}\\
 &I(U;G|F,Y)=0\label{eqn:chap423}.
 \end{align}
  Instead of functions of inputs, suppose if the inputs and outputs of users need to be hidden, our earlier work~\cite[Theorem 1]{DataKRP18} shows that asymptotically secure computability is equivalent to perfectly secure computability even with interactive communication. However,  this is not true for all secure computation problems. In particular, as \cite[Remark 2]{DataKRP18} shows, this is not true for the problem of function computation with privacy against an eavesdropper studied by Tyagi et al.~\cite{TyagiNG11}. The following theorem shows that such an equivalence holds for non-interactive communication setting considered here.    
 \begin{thm}\label{equivalence}
 When privacy is required only against Bob who computes a function based on a single transmission from Alice, asymptotically secure computability is equivalent to perfectly secure computability. 
 \end{thm}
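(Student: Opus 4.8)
The plan is to show the two directions of the equivalence separately; the "perfect $\Rightarrow$ asymptotic" direction is immediate, so the real content is "asymptotic $\Rightarrow$ perfect." First, suppose $(p_{XY},f,g)$ is perfectly securely computable via $p_{U|XY}$ satisfying \eqref{eqn:chap421}--\eqref{eqn:chap423}. Then I would take $M=U^n$ (i.e., Alice applies $p_{U|X}$ coordinatewise) and let Bob compute $Z^n$ from $(U^n,Y^n)$ using the deterministic map guaranteed by $H(F|U,Y)=0$; correctness \eqref{eqn:distasympcorrectness} holds with zero error, and the privacy quantity $I(M;G^n|Z^n,Y^n)$ single-letterizes to $n\,I(U;G|F,Y)=0$ using the i.i.d.\ structure and the fact that $Z_i$ is a function of $(U_i,Y_i)$. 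So asymptotic security follows.

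For the converse, suppose we have a sequence of schemes $p_{M|X^n}p_{Z^n|M,Y^n}$ meeting \eqref{eqn:distasympcorrectness}--\eqref{eqn:prvbob}. The strategy is the standard single-letterization via an auxiliary random variable. Introduce a time-sharing index $Q$ uniform on $\{1,\dots,n\}$ independent of everything, and set $U := (M, Q)$ — or more carefully $U:=(M, Z^{Q-1}, Y^{Q-1}? ...)$; I'd need to choose the auxiliary so that the Markov chain $U - X - Y$ holds. Since $M$ is generated from $X^n$ alone, $M - X^n - Y^n$, and then for the single-coordinate chain I would take $U := (M, X^{Q-1}, Q)$ or similar, and verify $U_Q := (M, \dots) $ together with $Q$ gives $U - X_Q - Y_Q$. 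Then I would bound: from \eqref{eqn:distasympcorrectness} and Fano, $H(F^n\mid Z^n) \le n\delta_n$ with $\delta_n\to 0$, hence $H(F^n \mid M, Y^n) \le n\delta_n$, which chain-rules and single-letterizes to $H(F \mid U, Y) \le \delta_n'$; from \eqref{eqn:prvbob}, $I(M; G^n \mid Z^n, Y^n) \le n\epsilon$, and I would replace $Z^n$ by $F^n$ at a cost controlled by Fano, then expand $I(M; G^n \mid F^n, Y^n)$ as a sum $\sum_i I(M; G_i \mid F^n, G^{i-1}, Y^n)$ and massage it (dropping/adding past symbols into the auxiliary) to get $\ge I(U;G\mid F,Y) - \epsilon_n'$ for an appropriate single-letter $U$. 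This yields a sequence of $p_{U^{(n)}|XY}$ with the three defining quantities of perfect computability all vanishing.

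The final step is a \emph{compactness / continuity} argument: the alphabet of $X,Y$ is finite, and by the support-cardinality bound (Carath\'eodory) we may assume $|\mathcal{U}|$ is bounded uniformly in $n$, so the conditional distributions $p_{U^{(n)}|XY}$ live in a compact set; pass to a convergent subsequence with limit $p_{U^*|XY}$. The three functionals $U - X - Y$ (equivalently $I(U;Y\mid X)=0$), $H(F\mid U,Y)$, and $I(U;G\mid F,Y)$ are continuous in this distribution (for fixed finite alphabets and fixed $p_{XY}, f, g$), so they are all zero at the limit, giving perfect secure computability.

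I expect the main obstacle to be \emph{choosing the auxiliary random variable correctly} so that both the Markov chain $U-X-Y$ and the two inequalities single-letterize simultaneously: the privacy term involves conditioning on $Y^n$ (all coordinates), so naively the single-coordinate privacy bound wants past \emph{and} future $Y$ symbols in the conditioning, which can break the Markov chain $U-X-Y$. The fix will be to put into $U$ exactly the past information Alice's message plus a prefix of her inputs while arguing, via the fact that $M$ depends only on $X^n$ and the i.i.d.\ structure, that including $Y^{i-1}$ is unnecessary (it is independent of $(X_i,M)$ given $X^{i-1}$, so it can be absorbed or dropped). Getting this bookkeeping exactly right, and handling the Fano-type replacement of $Z^n$ by $F^n$ inside the conditioning of the privacy term, is where the care is needed; the rest is routine.
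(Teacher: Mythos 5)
Your overall architecture (trivial forward direction; time-sharing index, Fano, single-letterization, then a compactness/continuity argument to pass from vanishing quantities to exact zeros) matches the paper's proof. But you leave the one genuinely delicate point --- the choice of auxiliary random variable --- unresolved, and the candidate you lean toward would not work. You propose $U:=(M,X^{Q-1},Q)$ and suggest that the past/future $Y$-symbols can be ``dropped'' to preserve the Markov chain $U-X-Y$. This fails on both of the other two constraints. For the privacy term, the natural chain-rule expansion gives $I(M;G^n\mid F^n,Y^n)=\sum_i\bigl(H(G_i\mid F^n,Y^n,G^{i-1})-H(G_i\mid M,F^n,Y^n,G^{i-1})\bigr)$, and to lower-bound this by a single-letter quantity you may only \emph{drop} variables from the conditioning of the subtracted entropy; the variables available to keep are $M$ and the $Y$-coordinates, not $X^{i-1}$, so an auxiliary containing $X^{i-1}$ cannot be extracted from this bound. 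For correctness, $H(F_i\mid U_i,Y_i)\le\epsilon$ via Fano requires that $(U_i,Y_i)$ determine $Z_i$; this holds when $(U_i,Y_i)=(M,Y^n)$ but not for $(M,X^{i-1},Y_i)$.

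The missing idea is that the ``obstacle'' you flag is not actually an obstacle: the paper takes $U_i=(M,Y^{i-1},Y^{i+1:n})$, i.e., it \emph{keeps} the other $Y$-coordinates in the auxiliary, and then verifies directly that the Markov chain survives, via $I(M,Y^{i-1},Y^{i+1:n};Y_i\mid X_i)\le I(M;Y^n\mid X^n)=0$, which follows from $M-X^n-Y^n$ and the i.i.d.\ structure of the inputs. With this $U_i$ (together with the time index $T$), all three single-letter quantities --- the Markov chain, $H(F_T\mid U_T,T,Y_T)$, and $I(U_T,T;G_T\mid Z_T,Y_T)$ --- are simultaneously small, and the concluding continuity step you describe then goes through as in the paper. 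Your forward direction (coordinatewise application of $p_{U|X}$) is fine, though the paper dispenses with it even more simply by noting that the perfect scheme is itself an $n=1$, $\epsilon=0$ protocol.
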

 \begin{remark}
 The problem of whether asymptotically secure computability is equivalent to perfectly secure computability for multiple rounds of communication remains open. 
 \end{remark}
 \begin{proof}
 It is easy to see that perfectly secure computability implies asymptotically secure computability since \eqref{eqn:chap421}-\eqref{eqn:chap423} define a protocol with $n=1$ and $\epsilon=0$. For the other direction, we single-letterize the constraints of asymptotically secure computability to get \eqref{eqn:chap421}-\eqref{eqn:chap423}. {Notice that the joint distribution of all the random variables is given by 
 \begin{multline}
 p_{X^n,Y^n,G^n,F^n,M,Z^n}(x^n,y^n,g^n,f^n,m,z^n)\\=\left(\prod_{i=1}^np_{XY}(x_i,y_i)\right)p_{M|X^n}(m|x^n)p_{Z^n|M,Y^n}(z^n|m,y^n)\\
 \times \mathbbm{1}\{g_i=g(x_i,y_i),f_i=f(x_i,y_i),i\in[1:n]\}.
 \end{multline}
 }
 Let $T$ be a random variable uniformly distributed on $[1:n]$ and independent of everything else. Privacy condition against Bob, \eqref{eqn:prvbob}, implies that
 \begin{align}
 n\epsilon&\geq H(G^n|Z^n,Y^n)-H(G^n|M,Z^n,Y^n)\nonumber\\
 &\geq \sum_{i=1}^n [H(G_i|F_i,Y_i)-\epsilon_1]-\sum_{i=1}^nH(G_i|M,Z^n,Y^n,G^{i-1})\label{eqn:distfree1}\\ 
 & \geq \sum_{i=1}^n[H(G_i|Z_i,Y_i)-\epsilon_1-\epsilon_2]\nonumber\\
 &\hspace{1cm}-\sum_{i=1}^n H(G_i|M,Y^{i-1},Y^{i+1:n},Z_i,Y_i)\label{eqn:distfree2}\\
 &= \sum_{i=1}^nI(U_i;G_i|Z_i,Y_i)-n(\epsilon_1+\epsilon_2)\label{eqn:distfreeaux}\\
 %&=nI(U_T;G_T|Z_T,Y_T,T)-n(\epsilon_1+\epsilon_2)\nonumber\\
 &=n[I(U_T,T;G_T|Z_T,Y_T)-I(T;G_T|Z_T,Y_T)]\nonumber\\
 &\hspace{1cm}-n(\epsilon_1+\epsilon_2)\nonumber\\
 &\geq nI(U_T,T;G_T|Z_T,Y_T)-n(\epsilon_1+\epsilon_2+\epsilon_3)\label{eqn:equivfinal}.
 \end{align}
 We have used the following fact in \eqref{eqn:distfree1}, \eqref{eqn:distfree2}, and \eqref{eqn:equivfinal}: if two random variables $A$ and $A'$ with 
  same support set $\mathcal{A}$  satisfy $||p_{A} -  p_{A'}||_{1} \leq \epsilon \leq 1/4 $, then it follows from \cite[Theorem 17.3.3]{CoverJ06} that $|H(A) - H(A')|\leq \eta \log |\mathcal{A}|$, where $\eta \rightarrow 0$ as $\epsilon \rightarrow 0$. Now \eqref{eqn:distasympcorrectness} implies that $\lVert p_{X^n,Y^n,Z^n,G^n}-p_{X^n,Y^n,F^n,G^n} \rVert_1\leq \epsilon$ which in turn implies  $\lVert p_{X_TY_TZ_TG_T}-p_{XYFG} \rVert_1\leq \epsilon$ using \cite[Lemma VI.2]{Cuff13}. These imply \eqref{eqn:distfree1}, \eqref{eqn:distfree2}, and \eqref{eqn:equivfinal} with $\epsilon_1,\epsilon_2,\epsilon_3\rightarrow 0$ as $\epsilon\rightarrow 0$. \eqref{eqn:distfreeaux} follows by defining $U_i=(M,Y^{i-1},Y^{i+1:n})$.
  
   From the asymptotically secure protocol, we have the Markov chain $Z^n-(M,Y^n)-X^n$ which implies that $Z^n-(M,Y^n)-(X^n,F^n)$ because $F^n$ is a deterministic function of $(X^n,Y^n)$. This further implies that $Z_i-(U_i,Y_i)-F_i, i\in[1:n]$.
%  \begin{align}
%  &Z^n-(M,Y^n)-X^n\nonumber\\
%  &\Rightarrow Z^n-(M,Y^n)-(X^n,F^n)\label{eqn:equivasympt}\\
%  %&\Rightarrow Z^n-(M,Y^n)-F^n\nonumber\\
%  &\Rightarrow Z_i-(U_i,Y_i)-F_i, i\in[1:n]\label{eqn:equivfano},
%  \end{align}
 Now using Fano's inequality \cite[Theorem 2.10.1]{CoverJ06}, this implies that, for $i\in[1:n]$, 
  \begin{align}
  H(F_i|U_i,Y_i)&\leq P(Z_i\neq F_i)\leq P(Z^n\neq F^n)\leq \epsilon,
  \end{align}
  where the last inequality follows from \eqref{eqn:distasympcorrectness}. This gives us 
  \begin{align}H(F_T|U_T,T,Y_T)\leq \epsilon.\label{eqn:equivforgot} 
  \end{align} 
  Consider
  \begin{align}
 & I(M,Y^{i-1},Y^{i+1:n};Y_i|X_i)\nonumber\\
  &\leq I(M,Y^{i-1},Y^{i+1:n},X^{i-1},X^{i+1:n};Y_i|X_i)\nonumber\\
  &=I(M;Y_i|Y^{i-1},Y^{i+1:n},X^n)\label{eqn:equivasymptindep}\\
  %&=I(M;Y^n|X^n)-I(M;Y^{i-1},Y^{i+1:n}|X^n)\nonumber\\
  &\leq I(M;Y^n|X^n)=0,
  \end{align}
  where \eqref{eqn:equivasymptindep} follows because $(X_i,Y_i)$ is independent of $(X^{i-1},X^{i+1:n},Y^{i-1},Y^{i+1:n})$, and the last equality follows from the Markov chain $M-X^n-Y^n$. This gives us 
  \begin{align}
(U_T,T)-X_T-Y_T\label{eqn:equivwynerziv}
  \end{align}
  Now using the continuity of total variation distance and mutual information in the probability simplex along similar lines as \cite[Lemma 6]{YassaeeGA15}, \eqref{eqn:chap421}-\eqref{eqn:chap423} follow from \eqref{eqn:equivfinal}, \eqref{eqn:equivforgot}, and \eqref{eqn:equivwynerziv}, respectively.
 \end{proof}

\section{Privacy Against an Eavesdropper}
\begin{figure}[htbp]
\begin{center}
\includegraphics[scale=0.857]{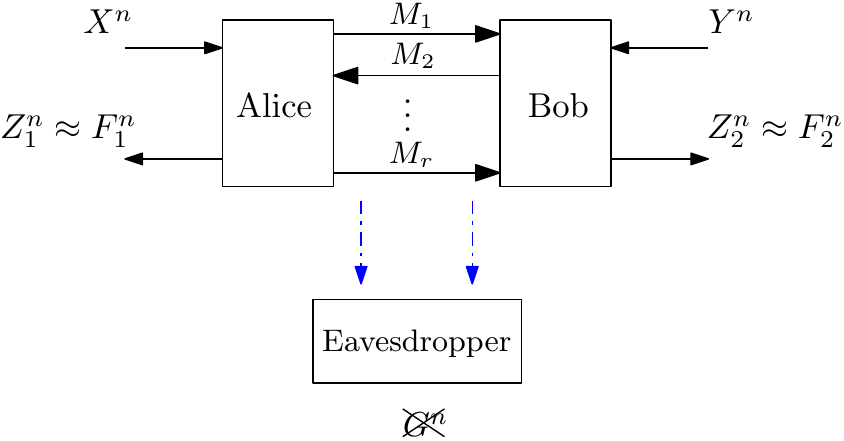}
\end{center}
\caption{Inputs are i.i.d. $p_{XY}$. Asymptotically secure setting is shown. Alice and Bob communicate interactively in order to compute $Z_1^n$ and $Z_2^n$ that are \emph{approximately} close to $F_1^n$ and $F_2^n$, respectively, $F_{ki}=f_k(X_i,Y_i), k\in[1:2], i\in[1:n]$, while hiding $G^n$, $G_i=g(X_i,Y_i)$ from an eavesdropper.}\label{figure:eaves}
\end{figure}
Let the inputs to Alice and Bob $X^n$ and $Y^n$, respectively, where $(X_i,Y_i)$, $i=1,\dots , n$, are i.i.d. with distribution $p_{XY}$, {where $X$ and $Y$ take values in finite sets $\mathcal{X}$ and $\mathcal{Y}$, respectively}. Let Alice and Bob communicate interactively and compute $Z_1^n$ and $Z_2^n$, respectively. Let $M_{[1:r]}$ denote the message transcript due to an interactive two-user protocol $\Pi$ and is accessible to an eavesdropper. We say that a tuple $(p_{XY},f_1(x,y),f_2(x,y),g(x,y))$ is \emph{asymptotically securely computable in $r$ rounds with privacy against an eavesdropper} (see Figure~\ref{figure:eaves}) if there exists a sequence of interactive protocols $\Pi_n$ such that, for every $\epsilon>0,$ there exists a large enough $n$ s.t.
\begin{align}
{P\left(\left(Z_1^n,Z_2^n\right)\neq \left(F_1^n,F_2^n\right)\right)}&\leq \epsilon,\label{eqn:eaves1}\\
I(M_{[1:r]};G^n)&\leq n\epsilon,\label{eqn:eaves2}.
\end{align}
We say that a tuple $(p_{XY},f_1(x,y),f_2(x,y),g(x,y))$ is perfectly securely computable in $r$ rounds if there exists a protocol $\Pi_n$ with $n=1$ such that \eqref{eqn:eaves1}-\eqref{eqn:eaves2} are satisfied with $\epsilon=0$. It is worthwhile to remark here that, in this setting, asymptotically secure computability is not equivalent to perfectly secure computability as pointed out in our earlier work \cite[Remark 2]{DataKRP18}. 

Tyagi et al.~\cite{TyagiNG11} studied asymptotically secure computability of this problem in multi-user setting for a special case where the function that needs to be hidden from the eavesdropper is same as the one computed by the users, i.e., $f_1=f_2=g$ in the current context. {They showed that interaction is not necessary for asymptotically secure computation, i.e, the protocol involves each user publicly communicating only one message each which is a randomized function of its own input.} Later, Tyagi \cite{Tyagi2012} studied a larger class of functions where the function that needs to be hidden from an eavesdropper is equal to one of the functions computed by the users and gave a characterization of secure computability for a class of functions. The protocol used for achievability there involves interactive communication. We ask the complementary question: Is interaction necessary for secure computation, in particular, are there any function tuples that are securely computable only using interactive protocols? We answer this question in the affirmative through the following example.

\begin{exmp}\label{example}
 Let $Y=(Y_0,Y_1)$ be a vector of two independent and uniformly distributed binary random variables and $X$ is another uniform binary random variable independent of $Y$. Let $p_{XY}$ denote this joint distribution. Consider $f_1(x,y)=f_2(x,y)=y_x$ and $g(x,y)=y_{\bar{x}}$, where $\bar{x}$ denotes the complement of $x$, i.e., $\bar{x}=1-x$. We show that this choice of $(p_{XY},f_1(x,y),f_2(x,y),g(x,y))$ is not asymptotically securely computable using non-interactive communication. We prove this via contradiction. Suppose there exists a non-interactive protocol that asymptotically securely computes $(p_{XY},f_1(x,y),f_2(x,y),g(x,y))$, {i.e., there exists a protocol in which each user transmits one message each that is a randomized function of its own input.} This means that, for every $\epsilon>0$, there exists $M_1,M_2$, and a large enough $n$ such that
 \begin{align}
 M_1-X^n-Y^n-M_2\label{eqn:alicemarkov},\\
H(Y_X^n|M_1,M_2,X^n)\leq n\epsilon \label{eqn:alicecorrectness},\\
H(Y_X^n|M_1,M_2,Y^n)\leq n\epsilon \label{eqn:bobcorrectness},\\
I(M_1,M_2;Y_{\bar{X}}^n)\leq n\epsilon \label{eqn:exmpsecurity},
 \end{align}
 where \eqref{eqn:alicecorrectness} and \eqref{eqn:bobcorrectness} uses Fano's inequality \cite[Theorem 2.10.1]{CoverJ06}. Notice that since we have the Markov chains $Y_X^n-(M_2,X^n)-M_1$ and $Y_X^n-(M_1,Y^n)-M_2$, \eqref{eqn:alicecorrectness} and \eqref{eqn:bobcorrectness} can be written as
 \begin{align}
 H(Y_X^n|M_2,X^n)\leq n\epsilon \label{eqn:alicecorrectness_new},\\
H(Y_X^n|M_1,Y^n)\leq n\epsilon \label{eqn:bobcorrectness_new}.
\end{align}
Let $\tilde{Y}=(\tilde{Y}_0,\tilde{Y}_1)=(\bar{Y}_0,Y_1)$. Now, since $p_{XY}=p_{\bar{X}Y}=p_{X\tilde{Y}}$ we have the following from \eqref{eqn:alicecorrectness_new} and \eqref{eqn:bobcorrectness_new}:
\begin{align}
 H(Y_{\bar{X}}^n|M_2,\bar{X}^n)\leq n\epsilon\label{eqn:alicefinal},\\
 H(\tilde{Y}_X^n|M_1,\tilde{Y}^n)\leq n\epsilon\label{eqn:bobfinal}.
\end{align}
Consider, 
\begin{align}
H(Y^n|M_2)&=H(Y^n|M_2,X^n) \label{eqn:alice1}\\
&=H(Y_0^n,Y_1^n|M_2,X^n)\nonumber\\
&=H(Y_X^n,Y_{\bar{X}}^n|M_2,X^n)\label{eqn:alice3}\\
&\leq H(Y_X^n|M_2,X^n)+H(Y_{\bar{X}}^n|M_2,X^n)\nonumber\\
&\leq 2n\epsilon \label{eqn:alice2},
\end{align}
where \eqref{eqn:alice1} follows because $(M_1,X^n)$ is independent of $(M_2,Y^n)$, \eqref{eqn:alice3} follows because conditioned on $X^n$, there is a bijection between $(Y_0^n,Y_1^n)$ and $(Y_X^n,Y_{\bar{X}})$, \eqref{eqn:alice2} follows from \eqref{eqn:alicecorrectness_new} and \eqref{eqn:alicefinal}. Next consider,
\begin{align}
H(X^n|M_1)&=H(X^n|M_1,Y^n)\label{eqn:bob1}\\
&=H(Y_X^n,\tilde{Y}_X^n|M_1,Y^n)\label{eqn:bob2}\\
&\leq H(Y_X^n|M_1,Y^n)+H(\tilde{Y}_X^n|M_1,Y^n)\nonumber\\
&\leq 2n\epsilon \label{eqn:bob3},
\end{align}
where \eqref{eqn:bob1} follows because $(M_1,X^n)$ is independent of $(M_2,Y^n)$, \eqref{eqn:bob2} follows because conditioned on $Y^n$, there is a bijection between $X^n$ and $(Y_X^n, \tilde{Y}_X^n)$, and \eqref{eqn:bob3} follows from \eqref{eqn:bobcorrectness_new} and \eqref{eqn:bobfinal}. Now $H(X^n|M_1)\leq 2n\epsilon$ and $H(Y^n|M_2)\leq 2n\epsilon$ implies that $H(Y_{\bar{X}}^n|M_1,M_2)\leq 4n\epsilon$. This is a contradiction to \eqref{eqn:exmpsecurity}. 

Notice that if we allow interactive communication, the tuple $(p_{XY},f_1(x,y),f_2(x,y),g(x,y))$ is asymptotically securely computable. To see this, note that $M_1=X^n$ and $M_2=Y_X^n$ satisfies the secrecy constraint \eqref{eqn:eaves2} (in fact with $\epsilon=0$). This completes the example.
 \end{exmp}
\if \arxive 1 In our notation, Tyagi et al.~\cite[Theorem~2]{TyagiNG11} state that $(p_{XY},g(x,y),g(x,y),g(x,y))$ is asymptotically securely computable if 
\begin{align}
H(g(X,Y))<I(X;Y)\label{eqn:tyagi}
\end{align}
 (and only if $H(g(X,Y))\leq I(X;Y)$).
We make the following observation.    
 \begin{thm}\label{lemma}
 If $p_{XY}$ and the function $g(x,y)$ are such that $H(g(X,Y))<I(X;Y)$, then the tuple $(p_{XY},f_1(x,y),f_2(x,y),g(x,y))$ is asymptotically securely computable for any functions $f_1(x,y)$ and $f_2(x,y)$.
 \end{thm}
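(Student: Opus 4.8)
The plan is to combine the achievability scheme behind Tyagi et al.'s Theorem~2 \cite{TyagiNG11} (cf.\ \eqref{eqn:tyagi}) with one simple observation: the secrecy constraint \eqref{eqn:eaves2} restricts only what the transcript $M_{[1:r]}$ reveals about $G^n$, and places no constraint at all on $F_1^n$ or $F_2^n$. It therefore suffices to exhibit a protocol whose public transcript is asymptotically independent of $G^n$ and after which Alice knows enough to produce $F_1^n$ and Bob knows enough to produce $F_2^n$. The cleanest way to guarantee the latter for \emph{arbitrary} $f_1,f_2$ is to have both terminals end up knowing the pair $(X^n,Y^n)$, i.e.\ to attain omniscience: if Alice recovers $Y^n$ she outputs $Z_1^n=(f_1(X_i,Y_i))_{i=1}^n$, and if Bob recovers $X^n$ he outputs $Z_2^n=(f_2(X_i,Y_i))_{i=1}^n$, with no additional communication, so that $(Z_1^n,Z_2^n)=(F_1^n,F_2^n)$ exactly whenever omniscience succeeds. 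This is essentially the structure of the scheme underlying Tyagi et al.'s result for $(p_{XY},g,g,g)$ --- the two terminals attain omniscience via public Slepian--Wolf communication whose transcript is made asymptotically independent of $G^n$ --- so the theorem follows, in effect, by running that protocol and then having Alice and Bob evaluate $f_1$ and $f_2$ locally on $(X^n,Y^n)$.

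To keep the argument self-contained I would take the transcript to be the standard two-message omniscience exchange: Alice sends a random bin index $M_1$ of $X^n$ at rate $H(X|Y)+\delta$ and Bob sends a random bin index $M_2$ of $Y^n$ at rate $H(Y|X)+\delta$; Bob decodes $X^n$ from $(M_1,Y^n)$ and Alice decodes $Y^n$ from $(M_2,X^n)$, with error probability vanishing in $n$ for every $\delta>0$, which gives \eqref{eqn:eaves1} for the outputs defined above. For \eqref{eqn:eaves2}, I would write $I(M_1,M_2;G^n)=I(M_1;G^n)+I(M_2;G^n|M_1)$ and bound the two terms by privacy amplification. The first term is small because the conditional min-entropy $H_\infty(X^n|G^n)$ is at least about $n\big(H(X)-H(g(X,Y))\big)$ (since $I(X^n;G^n)\le H(G^n)\le nH(g(X,Y))$), which exceeds the bin rate $n\big(H(X|Y)+\delta\big)=n\big(H(X)-I(X;Y)+\delta\big)$ with a positive margin whenever $\delta<I(X;Y)-H(g(X,Y))$. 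The second term is small because a minimum-rate bin $M_1$ of $X^n$ carries only about $n\delta$ bits about $Y^n$ (and $H(G^n)\le nH(g(X,Y))$), whence $H_\infty(Y^n|M_1,G^n)$ is at least about $n\big(H(Y)-H(g(X,Y))-\delta\big)$, which again exceeds the bin rate $n\big(H(Y|X)+\delta\big)=n\big(H(Y)-I(X;Y)+\delta\big)$ by a positive margin under $H(g(X,Y))<I(X;Y)$ once $\delta$ is small enough. A standard derandomization then fixes binning functions making the decoding error and $I(M_1,M_2;G^n)$ both at most $n\epsilon$ for $n$ large, which is exactly asymptotically secure computability.

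The main obstacle is the secrecy analysis of this enhanced transcript. Unlike in pure secret-key generation, $G^n$ here is a \emph{fixed} function of the very sources being exchanged, so one must argue that a Slepian--Wolf exchange run at precisely the rate that suffices to leak $(X^n,Y^n)$ between the terminals nonetheless carries negligible information about $G^n$; moreover the transcript is a \emph{product} of a hash of $X^n$ and a hash of $Y^n$, not a single joint hash of $(X^n,Y^n)$, so one cannot invoke the leftover hash lemma once and be done but must treat the two terms $I(M_1;G^n)$ and $I(M_2;G^n|M_1)$ separately as above. The quantitative crux is the pair of conditional-min-entropy estimates, whose viability rests entirely on the strict gap $I(X;Y)-H(g(X,Y))>0$; the remaining step of upgrading conditional-entropy estimates to the smooth conditional-min-entropy bounds that privacy amplification actually consumes is routine (via the AEP and smoothing) and I would only sketch it. Everything else --- correctness from Slepian--Wolf decoding, the chain-rule split of $I(M_1,M_2;G^n)$, and the local evaluation of $f_1$ and $f_2$ --- is immediate.
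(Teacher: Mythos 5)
Your first paragraph is exactly the paper's proof: invoke Tyagi et al.'s Theorem~2 achievability, under which $H(g(X,Y))<I(X;Y)$ guarantees omniscience via communication that is almost independent of $G^n$ in the sense of \eqref{eqn:eaves2}, and then observe that once both terminals know $(X^n,Y^n)$ they can locally evaluate arbitrary $f_1$ and $f_2$, so the secrecy constraint---which concerns only $G^n$---is unaffected. The paper stops there and simply cites \cite{TyagiNG11} for the secrecy of the omniscience transcript, so your additional self-contained Slepian--Wolf/privacy-amplification re-derivation (in particular the smooth min-entropy accounting for $I(M_2;G^n|M_1)$, which is more delicate than your sketch suggests) is not needed for, and goes beyond, the paper's argument.
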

 Tyagi et al.~\cite[Theorem~2]{TyagiNG11} showed that if \eqref{eqn:tyagi} is satisfied, then omniscience can be attained using non-interactive communication that is almost independent of $G^n$ (in the sense of \eqref{eqn:eaves2}). The proof of Theorem~\ref{lemma} follows from this and the fact that the users can then (approximately) compute any functions $F_1^n$ and $F_2^n$, respectively (even if $f_1$ and $f_2$ can be possibly different from $g$).
 \fi
\section{Acknowledgements}
We acknowledge support of the Department of Atomic Energy, Government of India, under project no. 12-R{\&}D-TFR-5.01-0500. Gowtham Kurri would like to thank Varun Narayanan for many helpful discussions on this paper. 
\if \arxive 1
\begin{appendix}\label{appendix:claim}
\begin{proof}[Proof of Claim~\ref{claim}]
%\section{Proof of Claim~\ref{claim}}\label{appendix:claim}
Suppose a protocol is private against Alice and Bob. Fix any input distribution $p_{XY}$. Consider
\begin{align}
&p(m|f,x,g)\nonumber\\
&=\frac{p(x,m,f,g)}{p(f,x,g)}\nonumber\\
&=\frac{\sum\limits_yp(x,y,m,f,g)}{\sum\limits_{y,m}p(x,y,m,f,g)}\nonumber\\
&=\frac{\sum\limits_yp(x,y)p(m|x,y)p(f|x,y)p(g|x,y)}{\sum\limits_{m,y}p(x,y)p(m|x,y)p(f|x,y)p(g|x,y)}\nonumber\\
&=\frac{\sum\limits_{\substack{y:g(x,y)=g\\ f(x,y)=f}}p(x,y)p(m|x,y)}{\sum\limits_{\substack{m,y:g(x,y)=g\\ f(x,y)=f}}p(x,y)p(m|x,y)}\label{eqn:appendix1}
\end{align}
Without loss of generality, assume that there exists $y^\prime$ such that $f(x,y^\prime)=f$ and $g(x,y^\prime)=g^\prime\neq g$. Otherwise we trivially have $p(m|f,x,g)=p(m|f,x)$. Now, from the definition of privacy, we have $p(m|x,y)=p(m|x,y^\prime)$, for $y,y^\prime$ such that $f(x,y)=f(x,y^\prime)=f$ and $g(x,y)=g\neq g^\prime=g(x,y^\prime)$. This implies that $p(m|x,y)$ is same for every $y$ such that $f(x,y)=f$. This observation reduces the expression in \eqref{eqn:appendix1} to $p(m|x,y)$ such that $f(x,y)=f$. Thus, we have
\begin{align}
p(m|f,x,g)=p(m|x,y) \ \text{for}\ y \ \text{s.t.}\ f(x,y)=f\label{eqn:appendix2}.
\end{align}
Consider 
\begin{align}
p(m|f,x)&=\frac{\sum\limits_yp(x,y,m,f)}{\sum\limits_{m,y}p(x,y,m,f)}\nonumber\\
&=\frac{\sum\limits_{y:f(x,y)=f}p(x,y)p(m|x,y)}{\sum\limits_{m,y}p(x,y)p(m|x,y)}\nonumber\\
&=p(m|x,y) \ \text{for}\ y \ \text{s.t.}\ f(x,y)=f\label{eqn:appendix3},
\end{align}
where the last equality follows from the same observation mentioned above \eqref{eqn:appendix2}, i.e., $p(m|x,y)$ is same for every $y$ such that $f(x,y)=f$. From \eqref{eqn:appendix2} and \eqref{eqn:appendix3}, we have $p(m|f,x,g)=p(m|f,x)$, for $p(f,x,g)>0$ which is equivalent to $I(M;G|F,X)=0$. Similarly, privacy against Bob implies that $I(M;H|F,Y)=0$. 
\if \arxive 1
\balance
\fi
For the other direction, suppose that for all input distributions $q_{XY}$, there exists a unique $p_{M|XY}$ such that $I(M;G|F,X)=I(M;H|F,Y)=0$. Consider $(x,y_1)$ and $(x,y_2)$ such that $g(x,y_1)=g_1\neq g(x,y_2)=g_2$ and $f(x,y_1)=f(x,y_2)=f$. Fix a $p_{XY}$ that is supported only on $(x,y_1)$ and $(x,y_2)$. Consider 
\begin{align}
p(m|x,y_1)&=p(m|x,y_1,g_1,f)\nonumber\\
&=p(m|x,g_1,f)\label{eqn:claim1}\\
&=p(m|x,f)\label{eqn:claim2}\\
&=p(m|x,g_2,f)\label{eqn:claim3}\\
&=p(m|x,y_2,g_2,f)\label{eqn:claim4}\\
&=p(m|x,y_2)\nonumber,
\end{align}
where \eqref{eqn:claim1} and \eqref{eqn:claim4} follow because under this input distribution, $G$ is a function of $Y$ and vice versa, \eqref{eqn:claim2} and \eqref{eqn:claim3} follow because $I(M;G|F,X)=0$. Similarly, we can show that $p(m|x_1,y)=p(m|x_2,y)$ for $(x_1,y)$ and $(x_2,y)$ such that $g(x_1,y)\neq g(x_2,y)$ and $f(x_1,y)=f(x_2,y)$.  
\end{proof}
\end{appendix}

\fi
%\input{specialcase}
%\printbibliography
\bibliographystyle{IEEEtran}
\if \arxive 0
\balance
\fi
\bibliography{Bibliography}

\end{document}